\theoremstyle{plain}
\newtheorem{thm}{Theorem}
\newtheorem{lem}{Lemma}
\newtheorem{cor}{Corollary}
\newtheorem*{thmA}{Theorem A}
\newtheorem*{condB}{Condition B}
\newtheorem*{mconj}{Main Conjecture}
\newcommand{\ie}{i.\,e.\ }
\newcommand{\lmod}{{\operatorname{-mod}}}
\newcommand{\tr}{\mathrm{Tr}}
\newcommand{\Ext}{\mathrm{Ext}}
\newcommand{\Id}{\mathrm{Id}}
\newcommand{\Image}{\mathrm{Im}}
\newcommand{\End}{\mathrm{End}}
\newcommand{\Hom}{\mathrm{Hom}}
\title{Statistical Mechanics and Categorical Entropy}
\author{
  Haiqi Wu\\
  Computer Laboratory\\
  University of Cambridge\\
  Cambridge, UK\\
\texttt{\detokenize{hw625@cam.ac.uk}}
  \and
  Kai Xu\\
  Department of Physics\\
  Harvard University \\
  Massachusetts, United States\\
\texttt{\detokenize{k_xu@g.harvard.edu}}
  }
\date{\today}
\begin{document}

\maketitle
\begin{abstract}
    \noindent This paper investigates the relationship between categorical entropy and von Neumann entropy of quantum lattices. We begin by studying the von Neumann entropy, proving that the average von Neumann entropy per site converges to the logarithm of an algebraic integer in the low-temperature and thermodynamic limits. Next, we turn to categorical entropy. Given an endofunctor of a saturated $A_{\infty}$-category, we construct a corresponding lattice model, through which the categorical entropy can be understood in terms of the information encoded in the model. Finally, by introducing a gauged lattice framework, we unify these two notions of entropy. This unification leads naturally to a sufficient condition for a conjectural algebraicity property of categorical entropy, suggesting a deeper structural connection between $A_{\infty}$-categories and statistical mechanics.
\end{abstract}
\newpage
\tableofcontents
\newpage
\section{Introduction}
\noindent
This paper aims at understanding the algebraicity conjecture of categorical entropy \cite{1307.8418} of endofunctors on saturated $A_{\infty}$-categories \cite{0606241v3}. 
\\
\newline
\noindent
The study of \( A_\infty \)-categories \cite{0510508, 0606241v3} 
and categorical entropy \cite{1307.8418}
has broad applications across mathematics and physics. 
Categorical entropy, as introduced in the context of dynamical systems and category theory \cite{1307.8418}, measures the complexity of transformations in a category. 
Loosely speaking, to an endofunctor $F$ on a triangulated category, one associates a function $h_t(F)$ that describes the growth of complexity as the number of times $F$ is applied. 
In symplectic geometry, for example, the Fukaya category of a symplectic manifold provides an \( A_\infty \)-category \cite{HoriMS} where categorical entropy can analyze symplectic automorphisms and pseudo-Anosov maps \cite{Blanc_2015}, 
offering a categorical lens on dynamical systems traditionally understood through topological entropy \cite{adlertopologicalentropy}. 
In algebraic geometry, categorical entropy applies to autoequivalences in derived categories of coherent sheaves \cite{orlov2009derivedcategoriescoherentsheaves},
enabling the study of dynamical behavior of birational maps \cite{Cantat_2013, Blanc_2015} and stability conditions \cite{bridgeland2006stabilityconditionstriangulatedcategories}. By embedding the traditional notion of entropy within the homological and higher categorical structure, categorical entropy allows for a nuanced understanding of complexity in a categorical setting, with implications for understanding stability, transformations, and the homotopy properties of categories across various mathematical and physical disciplines. 
\newline
\noindent
\\
It has been conjectured \cite{1307.8418} that in a saturated $A_{\infty}$ category, $\exp(h_0(F))$, the exponential of the entropy $h_t(F)$ at the value $t=0$, is an algebraic integer. It was asked what the natural sufficient conditions are for this to hold. \\
\newline
\noindent
On the other hand, consider a one-dimensional lattice $L$ of $N$ sites, each associated with a vector space $\mathbb{C}^2\cong \mathrm{span}_{\mathbb{C}}(\ket{1},\ket{0})$. Suppose the evolution of this lattice system is described by the \textbf{Fibonacci Hamiltonian} acting locally on each pair of adjacent sites of the lattice as:
\[
  \mathcal{H}\ket{xy} =
  \begin{cases}
    \ket{11} & \text{if } x = y = 1, \\
    0 & \text{otherwise}.
  \end{cases}
\]
\noindent
In \cite{wang2024latticemodelfibonaccidegree}, it is shown that the average degree of degeneracy per site of this system, $\frac{1}{N}\dim \ker(\mathcal{H})$, is an algebraic integer in the limit $N\rightarrow\infty$. In this paper, we will generalize this result by means of the von Neumann entropy \cite{vonNeumannQSM} of a quantum lattice system. The result we obtain provides an intriguing relationship between statistical mechanics and number theory. 
\\
\newline
\noindent
We try to unify these two above-mentioned concepts. We will use tools developed in quantum lattice models to formulate a sufficient condition for the algebraicity conjecture in the $A_{\infty}$-category context. As we shall see, this condition turns out to be a common generalization of the algebraicity statement about both the lattice (von Neumann) entropy and the categorical entropy.
\\
\newline
\noindent
In Section 2, we discuss the von Neumann entropy of a quantum lattice. We prove that, when the number of sites in a lattice goes to $\infty$, the exponential of the average entropy per site is an algebraic integer. We summarize the result at the end of this section as Theorem A.\\
\newline
\noindent
In Section 3, we talk about the categorical entropy in the saturated $A_{\infty}$-category setting. We will first extract a lattice model from the information of an endofunctor, and then specify a sufficient condition for the algebraicity conjecture of $\exp(h_0(F))$ naturally induced from the lattice model. We call this Condition B.
\\
\newline
\noindent
We finish Section 3 with gauged lattice models. We propose our main conjecture about the von Neumann entropy of a gauged lattice. As a result, the algebraicity statements about the quantum lattice and about the saturated $A_{\infty}$-category are both special cases of the main conjecture:
\[
\begin{tikzcd}[ column sep=small, scale=0.9]
& \text{Main Conjecture}\arrow[Rightarrow]{dl}{}\arrow[Rightarrow]{dr}{}   \\
\text{Theorem A}  &  & \text{Condition B}\text{ (conjectured)}\arrow[Rightarrow]{d}{}\\
 & & \shortstack{\textit{algebraicity cojecture} \\ \textit{of $\exp(h_0(F))$}}\\
\end{tikzcd}
\]
\\
\noindent
\textbf{Acknowledgements.} We express our gratitude to Nils Lauermann, Ricardo Ali, Thibaut Benjamin, Ioannis Markakis for many insightful discussions, and to Prof. Jamie Vicary for the invaluable feedback. Special thanks to Prof. Dame Caroline Humphrey for her help during the author's most difficult time. 
\section{Lattice model and the von Neumann entropy}
\noindent
In statistical mechanics \cite{Baxter:1982zz}, entropy serves as a key concept in understanding the probabilistic nature of thermodynamic systems. Formally introduced by Boltzmann \cite{Boltzmann_2012}, entropy $S$ is defined in terms of the number of possible microscopic configurations (or microstates) that correspond to a given macroscopic state (or macrostate) of a system. Mathematically, \textbf{the Boltzmann entropy} is expressed as 
$$S=k_B\log \Omega= -k_B \sum_i P_i \log P_i $$ 
where $\Omega$ represents the count of accessible microstates in case they are equally distributed; $P_i$ represents their probability in general and $k_B$ is Boltzmann's constant, which may be set to $1$ in natural units. Entropy quantifies the level of uncertainty or randomness associated with the exact arrangement of particles, given only the macroscopic variables such as temperature, volume, and pressure. As the system evolves, it tends to move towards states of higher entropy, reflecting an increase in disorder and aligning with the second law of thermodynamics. This movement toward equilibrium is fundamentally linked to the probabilistic nature of particle interactions, providing a bridge between microscopic dynamics and observable macroscopic phenomena. Entropy thus becomes not only a measure of disorder but also a driving force behind irreversible processes in physical systems, playing a pivotal role in understanding the arrow of time and the behavior of matter in various states.
\noindent
\newline
\\
In quantum statistical mechanics \cite{vonNeumannQSM,Bratteli:1979tw}, the concept of entropy extends to accommodate the probabilistic nature of quantum states and their behavior under the laws of quantum mechanics. Quantum mechanical entropy, often referred to as von Neumann entropy, is a measure of the uncertainty or information content associated with the states of a quantum system. For a quantum system represented by a nonnegative definite density matrix $\rho$ with trace $1$, \textbf{the von Neumann entropy} $S$ is defined as 

$$S=-\tr(\rho\log \rho)$$
where $\tr$ denotes the trace operator. This definition generalizes the classical notion of entropy, incorporating the fact that quantum states can exist in superpositions, and the system may not be in a definite state until a measurement is performed. 
\newline
\\
\noindent
Based on the concepts of entropy in both classical and quantum mechanics, we can understand the Boltzmann ensemble, which underpins the statistical description of systems in thermal equilibrium, and emerges naturally when we seek to maximize entropy for a system with fixed constraints, such as energy.
\newline
\\
\noindent
In a classical setting, if we consider a system in equilibrium with a heat reservoir, it is subject to a fixed average energy constraint. To determine the most likely distribution of particles across states, we maximize the entropy \( S = -k_B \sum_i P_i \log P_i \), where \( P_i \) is the probability that the system is in the \(i\) -th state with energy \( E_i \), subject to the constraint \( \sum_i P_i E_i = U \), where \( U \) is the average energy. This maximization leads to the Boltzmann distribution:
\[
P_i = \frac{e^{-E_i / k_B T}}{Z}
\]
where \( T \) is the temperature (which appears as the natural Lagrange multiplier), \( k_B \) is Boltzmann’s constant, and \( Z = \sum_i e^{-E_i / k_B T} \) is the partition function. This distribution describes the most probable state of the system in thermal equilibrium, with higher energy states being less probable as they contribute less to the overall entropy.
\\
\newline
\noindent
In a quantum setting, a similar approach applies to systems in equilibrium. Here, the density matrix \( \rho \) captures the distribution of the quantum states, and the entropy is given by the von Neumann formula \( S = -k_B \operatorname{Tr}(\rho \log \rho) \). To determine the density matrix that maximizes entropy while keeping the average energy fixed, we maximize \( S \) under the constraint \( \operatorname{Tr}(\rho \mathcal{H}) = U \), where \( \mathcal{H} \) is the Hamiltonian of the system and \( U \) is the average energy. This procedure yields the quantum analog of the Boltzmann distribution:
\[
\rho = \frac{e^{-\mathcal{H} / k_B T}}{Z}
\]
where \( Z = \operatorname{Tr}(e^{-\mathcal{H} / k_B T}) \) is the quantum partition function. This density matrix provides the equilibrium distribution over quantum states, reflecting the tendency of the system to occupy states that maximize entropy within the energy constraint.
\\
\newline
\noindent
Thus, in both classical and quantum settings, the Boltzmann ensemble represents the state of maximum entropy for a given energy. By maximizing entropy, the system naturally adopts a distribution in which higher-energy states are exponentially suppressed compared to lower-energy ones. This maximization principle not only defines the equilibrium state but also underscores the intrinsic link between entropy and the probabilistic nature of statistical mechanics. The Boltzmann ensemble is therefore a natural outcome of the entropy maximization process, embodying the statistical distribution that best represents a system in thermal equilibrium. In doing so, it provides a powerful framework for calculating thermodynamic properties, enabling the study of phase transitions, fluctuations, and the macroscopic behavior of matter as it evolves towards equilibrium.

\subsection{The algebraicity in the case of a quantum lattice model}

The notion of lattice models \cite{Baxter:1982zz} is a cornerstone for classical statistical mechanics, offering a simplified framework for analyzing the collective behavior of interacting particles arranged in a regular, grid-like structure. In these models, each lattice point, or site, represents a particle or a small region, and the states of these sites are governed by a set of interactions and rules. The Ising model \cite{Ising:1925em}, one of the most well-known lattice models, serves as a classic example for studying ferromagnetism. In the Ising model, each site on the lattice is assigned a spin, either up or down, and neighboring spins interact to minimize or maximize alignment, depending on whether the interactions are ferromagnetic or antiferromagnetic. By employing techniques such as the partition function, researchers can calculate thermodynamic quantities like entropy, free energy, and magnetization to understand phase transitions \cite{Landau:1980mil}. 
\\
\newline
\noindent
Quantum statistical mechanics  \cite{Bratteli:1979tw} adapts lattice models to account for quantum mechanical principles, allowing researchers to explore quantum phase transitions and the behavior of particles at extremely low temperatures, where quantum effects dominate. In quantum lattice models, particles are placed on a lattice, and their interactions are governed by quantum operators instead of classical probabilities. The Bose-Hubbard model \cite{BoseHubbard} and the Heisenberg model \cite{HeisenbergModel,Baxter:1982zz} are well-known examples of quantum lattice models. By studying quantum lattice models, researchers gain insights into the quantum nature of matter, particularly in systems where entanglement and quantum correlations play a significant role.
\\
\newline
\noindent
The basic ingredients of a quantum lattice model consist of the following:
\begin{itemize}
    \item A $d$-dimensional lattice $L$ of size $N$ (Here by lattice we mean $\{1,2,\cdots,N\}^d$)

    \item For each lattice point $x\in L$, we have a local Hilbert space $V_x$ such that all of them are canonically isomorphic (here we regard, for example, $V_1=V_{N+1}$ when $d=1$, and similarly for higher dimensions). 

    Given a subset $U\subseteq L$, we define its space of states 

    $$V(U)=\bigotimes _{x\in U} V_x$$

    and the global space of states $V=V(L)$
    
    \item For some subset $U\subseteq L$ of size $m<N$, we have a local Hamiltonian (\ie a non-negative definite Hermitian operator) $\mathcal{H}_U$ acting on $V(U)$.
    Then $\mathcal{H}_U$ also naturally acts on $V$ by tensoring with the identity operators of other sites. We define the global Hamiltonian 

    $$\mathcal{H}=\sum _{U'\subseteq L \mathrm{ \,is\, a\, translation\, of\,} U} \mathcal{H}_{U'}$$

    \item (Assumption) any two local Hamiltonians $\mathcal{H}_{U'}$ and $\mathcal{H}_{U''}$ obtained as above commute, namely,
    \[
    [\mathcal{H}_{U'},\mathcal{H}_{U''}]=0,
    \] as well as 
    the operators $\mathrm{P}_{\ker \mathcal{H}_{U'}}$ and $\mathrm{P}_{\ker \mathcal{H}_{U''}}$ associated to $\mathcal{H}_{U'},\mathcal{H}_{U''}$, respectively,
    where $\mathrm{P}_{\ker \mathcal{H}_U}$ denotes the projection onto the kernel of $\mathcal{H}_U$. In the sequel, we will denote them by $\mathrm{P}_U$ with a slight abuse of notation.

    \end{itemize}
From these data, we naturally have the von Neumann entropy of this lattice model defined in the previous section. It's natural to study its behavior, but this is too complicated and it's impossible to understand its detailed behavior in full generality. 
\begin{figure}[H]
    \centering
    \includegraphics[width=7.7cm]{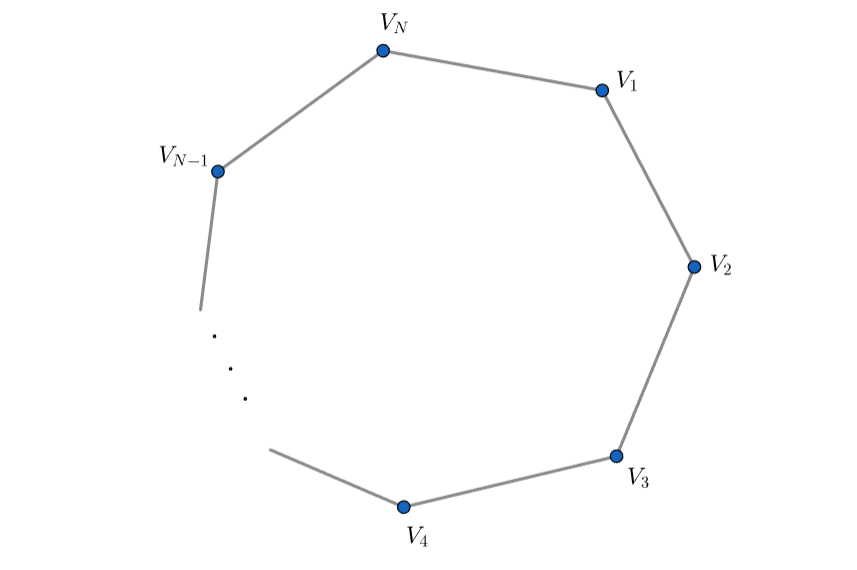}
    \caption{A one-dimensional lattice with $V_{N+1}=V_1$}
    \label{fig:graphic representation example}
\end{figure}
\noindent
Our main theorems concern the behavior of the system in the \textbf{low-temperature} and \textbf{thermodynamic limits} $ T\longrightarrow 0$ and $N\longrightarrow \infty$:

\begin{lem}
    In the low temperature limit $T\longrightarrow 0$, the von Neumann entropy is given by the logarithm of \textbf{the ground state degeneracy G}:

    $$S=\log G=\log (\dim \ker \mathcal{H})$$.
\end{lem}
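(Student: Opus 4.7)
The plan is to compute the von Neumann entropy $S = -\operatorname{Tr}(\rho \log \rho)$ directly in the spectral basis of $\mathcal{H}$ and then pass to the limit $\beta = 1/k_B T \to \infty$. Since $\mathcal{H}$ is a non-negative Hermitian operator on the finite-dimensional space $V$, it admits an orthonormal eigenbasis $\{\ket{i}\}$ with eigenvalues $0 = E_1 = \cdots = E_G < E_{G+1} \leq \cdots \leq E_{\dim V}$, where $G = \dim \ker \mathcal{H}$. The Boltzmann density matrix is diagonal in this basis, and the entropy reduces to the classical Shannon form $S = -\sum_i p_i \log p_i$ with $p_i = e^{-\beta E_i}/Z$ and $Z = \sum_i e^{-\beta E_i}$.

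First, I would expand the expression by substituting $\log p_i = -\beta E_i - \log Z$, yielding the standard thermodynamic identity
\begin{equation*}
S = \beta \langle \mathcal{H} \rangle + \log Z, \qquad \langle \mathcal{H}\rangle = \tfrac{1}{Z}\sum_i E_i e^{-\beta E_i}.
\end{equation*}
Then I would analyze each term separately as $\beta \to \infty$. For the partition function, splitting the sum into ground-state and excited-state contributions gives $Z = G + \sum_{i>G} e^{-\beta E_i}$, and because every $E_i$ with $i > G$ is strictly positive and the spectrum is finite, the second sum decays exponentially, so $Z \to G$ and $\log Z \to \log G$.

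Next, I would show that $\beta \langle \mathcal{H}\rangle \to 0$. Since the ground-state eigenvalues contribute $0$ to the numerator, one has $\beta \langle \mathcal{H}\rangle = \tfrac{1}{Z}\sum_{i>G} \beta E_i e^{-\beta E_i}$; for each fixed positive $E_i$, the factor $\beta E_i e^{-\beta E_i}$ tends to $0$ as $\beta \to \infty$, and the sum is finite, so this term vanishes in the limit. Combining these two calculations yields $S \to \log G$, which is the claim.

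There is no genuine obstacle here: the argument is a textbook thermodynamic limit, and the only point requiring care is the interchange of the $\beta \to \infty$ limit with the sums, which is trivial in finite dimension because the spectrum is discrete and bounded. The mild subtlety worth flagging is that the result depends crucially on the existence of a strictly positive spectral gap above the ground-state sector, which is automatic in the finite-dimensional setting considered here but would need to be addressed separately were one to take the thermodynamic limit $N \to \infty$ before $T \to 0$.
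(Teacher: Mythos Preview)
Your proof is correct. Both you and the paper diagonalize $\mathcal{H}$ and rely on the elementary fact that $e^{-\beta E}\to 0$ for $E>0$ while $e^{-\beta E}=1$ for $E=0$. The difference lies in how the entropy is actually extracted: the paper only records the operator limit $e^{-\beta\mathcal{H}}\to\mathrm{P}_{\ker\mathcal{H}}$ and leaves the final passage to $S=\log G$ implicit, whereas you compute the entropy directly via the thermodynamic decomposition $S=\beta\langle\mathcal{H}\rangle+\log Z$ and verify separately that the energy term vanishes. Your route is in fact slightly more complete: the extra observation $\beta E\,e^{-\beta E}\to 0$ is precisely what rules out any anomalous excited-state contribution to $S$ in the limit, a point the paper's argument handles only tacitly (one still needs that $-\tr(\rho\log\rho)$ is continuous in $\rho$ on the finite-dimensional simplex to pass from the operator limit to the entropy value). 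Your closing caveat about the spectral gap and the order of the $N\to\infty$ and $T\to 0$ limits is also well placed and goes beyond what the paper addresses.
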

\begin{proof} This follows from the identity $$\lim_{T\longrightarrow 0} e^{-\frac{\mathcal{H}}{k_BT}}=\mathrm{P}_{\ker \mathcal{H}}.$$
Let $\beta = \frac{1}{k_BT}$. Then for non-negative real number $E$,
\[   
\lim_{\beta\rightarrow\infty}e^{-\beta E} = 
     \begin{cases}
       0 &\quad\text{if }E>0\\
       1 &\quad\text{if }E=0 \\

     \end{cases}
\]
Let 
\[
  D =
  \begin{pmatrix}
    E_{1} & & \\
    & \ddots & \\
    & & E_{n}
  \end{pmatrix}
\]
be the diagonal matrix corresponds to $\mathcal{H}$, where $E_i$'s are eigenvalues of $\mathcal{H}$. Then
\[
  e^{-\beta D} =
  \begin{pmatrix}
    e^{-\beta E_1} & & \\
    & \ddots & \\
    & & e^{-\beta E_n}
  \end{pmatrix}
\]
is the diagonal matrix corresponding to $e^{-\beta \mathcal{H}}$. This diagonal matrix has entries $1$ precise where the eigenvalues of $\mathcal{H}$ are $0$.
\end{proof}
\noindent
Now that we want to study its behavior when we vary the system size $N$, we denote the entropy by $S_N$, the global Hamiltonian by $\mathcal{H}_N$, and the ground-state degeneracy by $G_N$. By the above lemma, $S_N=\log G_N$ and $G_N=\dim(\ker (\mathcal{H}_N))=\tr \mathrm{P}_{\ker \mathcal{H}_N} $.

\begin{thm}
    In the low temperature limit $T\longrightarrow 0$, the ground state degeneracy $G_N$ for $N\in\mathbb{N}^*$ satisfies a linear recurrence relation with integral coefficients

    $$\sum_{k=0}^{N_0} a_kG_{N-k}=0$$

\noindent where the order $N_0$ depends only on $\dim V$ and $m$, not on $N$.
    \end{thm}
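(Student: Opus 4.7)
The plan is to recast $G_N$ as the trace of the $N$-th power of a transfer operator of fixed dimension, and then to extract an integer linear recurrence from the characteristic polynomial together with a standard rationality result. First, by Lemma~1 we have $G_N = \tr \mathrm{P}_{\ker \mathcal{H}_N}$, and because the local projectors $\mathrm{P}_{U'}$ pairwise commute, their product equals the projection onto the intersection of their kernels, so
\[
G_N \;=\; \tr\!\Bigl(\prod_{i=1}^N \mathrm{P}_i\Bigr),
\]
where $\mathrm{P}_i$ denotes the fixed projector $\pi = \mathrm{P}_U \in \End(V^{\otimes m})$ acting on sites $\{i, i+1, \ldots, i+m-1\}$ of the cyclic 1D lattice.

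Next I would repackage this cyclic trace as a tensor-network contraction and extract a fixed-size transfer matrix. Viewing $\pi$ as a $2m$-leg tensor, the product forms a cyclic chain of $\pi$-tensors in which adjacent factors overlap on $m-1$ sites, with the trace identifying the remaining top legs with the bottom legs around the cycle. Collecting, at each site, the indices carried by the $m$ projectors touching it together with the trace wrap-around into a single bond vector in a space $W$ of dimension depending only on $\dim V$ and $m$ (concretely at most $(\dim V)^{m(m-1)}$), the local tensor $\pi$ induces an operator $T \in \End(W)$ that is independent of $N$, and a direct index check gives
\[
G_N \;=\; \tr_W(T^N).
\]
Consequently $\sum_{N \geq 1} G_N\, z^N = \tr_W\!\bigl(zT(I - zT)^{-1}\bigr)$ is rational in $z$ with denominator $\det(I - zT)$ of degree at most $\dim W$, so by Cayley--Hamilton the sequence $\{G_N\}$ satisfies a linear recurrence of order at most $\dim W$ with, a priori, complex coefficients.

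Integrality of the coefficients follows from the fact that $G_N \in \mathbb{Z}_{\geq 0}$ for every $N$: the generating function lies in $\mathbb{Z}[[z]] \cap \mathbb{C}(z)$, and by the classical Fatou--P\'olya theorem it must equal $P(z)/Q(z)$ for some $P, Q \in \mathbb{Z}[z]$ with $Q(0) = 1$, yielding the desired homogeneous integer-coefficient recurrence of order $N_0 \leq \dim W$, a bound depending only on $\dim V$ and $m$. The main obstacle is the explicit construction of $T$ and the verification that its bond space can be chosen uniformly in $N$: because the natural operator ordering of $\mathrm{P}_1, \ldots, \mathrm{P}_N$ makes the trace identification at the closure of the cycle pair top and bottom indices in an asymmetric way at the wrap-around site, the index bookkeeping is delicate and must be arranged so that one and the same $T$ controls all $N$. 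Once this uniformity is in place, the rationality of the generating function and the integrality step go through essentially formally.
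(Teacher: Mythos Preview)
Your strategy coincides with the paper's at its core: both reduce to $G_N=\tr(T^N)$ for a fixed transfer operator independent of $N$, and then invoke Cayley--Hamilton to obtain a linear recurrence of bounded order. Two points of divergence are worth recording. First, the paper's construction of $T$ is more explicit and avoids the wrap-around bookkeeping you flag as delicate: for $m=2$ (Lemma~2) one simply regroups the four indices of $\mathrm{P}_U$ by setting $Q^{(i,i')}_{(j,j')}:=\mathrm{P}^{ij'}_{i'j}$, obtaining $T=\sigma\bullet\mathrm{P}_U$ as a $(\dim V)^2\times(\dim V)^2$ matrix directly, with the cyclic trace identity $\tr\prod_x \mathrm{P}_{U+x}=\tr(T^N)$ verified by a one-line index computation; the general-$m$ and higher-dimensional cases are handled by the same index-permutation trick. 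Second, the integrality arguments differ genuinely: the paper asserts that $T$ itself has integer entries, so that $\chi_T\in\mathbb{Z}[x]$ and the recurrence coefficients are integral for free, whereas you use only that each $G_N$ is a nonnegative integer and appeal to Fatou--P\'olya on the rational generating series. Your route is less elementary but more robust, since it does not rely on any integrality of the entries of $\mathrm{P}_U$; the paper's route is shorter when that integrality holds, and in either case the resulting bound $N_0$ depends only on $\dim V$ and $m$.
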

\noindent Before proving this theorem, we introduce some additional notation.\\
First, let $L$ be a lattice of dimension $1$ with size $N$. Suppose the vector spaces associated with the local sites are $V_1$, $V_2$, $\dots$, $V_N$. Then
\[V(L)=V_1\otimes V_2\otimes\dots\otimes V_N.\]
\noindent We define the translation operator $\tau:V(L)\rightarrow V(L)$ by
\[
\tau:v_1\otimes v_2\otimes\dots\otimes v_N\mapsto v_2\otimes  v_3\otimes\dots\otimes v_N\otimes v_1
\]
Visually, this operation corresponds to a cyclic left shift of the sites. Observe that one can visit all $V_i$'s for $i\in\{1,2,\dots,N\}$ starting from $V_1$ by repeatedly applying $\tau$. Moreover, given any local operator $F$ acting on $U$ for some sublattice $U\subseteq L$ of size $m$ (we may assume, without loss of generality, that $U$ consists of the first $m$ sites), 
\begin{itemize}
    \item$\tau$ translates adjacent sites to adjacent sites, so that $F\circ\tau$ is well-defined;
    \item $\{F\circ\tau^i|i\in\mathbb{Z}\}$ covers all possible translated positions of $F$ on $L$.
\end{itemize}
We say that $\tau$ preserves adjacency with respect to $F$.
\\
In general, when $L$ is of any dimension $d$ and $U\subseteq L$, it is possible to design a translation $\tau$ on $L$, which traverses all local sites in $L$ by applying it repeatedly. Given any local operator $F$ acting on $U$, it is always possible to find a translation that traverses all sites starting from any given site in $U$ which preserves adjacency with respect to $F$ by treating each dimension separately. To illustrate, suppose for example $d=2$ and $F$ acts on a square of length $1$. Consider $\tau$ as follows: 
\[
v_{1,1}\mapsto v_{1,2}\mapsto\dots\mapsto v_{1,N}\mapsto v_{2,N}\mapsto v_{2,1}\mapsto v_{2,2}\mapsto\dots\mapsto v_{2,N-1}\mapsto v_{3,N-1}\mapsto\dots  
\]
Visually, $\tau$ traverses the \textit{diagonal} of the torus and eventually returns to $v_{1,1}$. During the process, $\tau$ preserves adjacency and covers all possible translated positions where 
$F$ can act. Given such a translation $\tau$, we can define the product $$\prod_{\tau}F:=\prod_{i=0}^{|L|-1}F_{\tau^i(U)}=\prod_{i=0}^{|L|-1}F\circ\tau^i.$$ Hence $\tau$ provides an \textit{order} of composing local operators. Since we assume the commutativity of those operators we are working with, we shall not worry about the order of the composition.
\\
This construction obviously generalizes to an arbitrary dimension and arbitrary length of the sides of $U$. 
\\
\newline
\noindent Given any tensor $F^{i_1,\dots,i_k}_{j_1,\dots,j_l}$, there is a natural action of $S_{k+l}$ on $F$. We denote the action by $\sigma\bullet F$ for $\sigma\in S_{k+l}$. Note that if $\sigma$ is a \textit{$k,l$-unshuffled} permutation, the type of $F$ is preserved. We call such actions \textbf{permutations of the indices of $F$}. 
\\
Now fix $d=1$, so $|L|=N$. Given any local Hamiltonian $\mathcal{H}_{loc}$ acting locally on two adjacent sites $U$ (so $U\subseteq L,|U|=2$) with associated projection $\mathrm{P}_U$, the product 
$$\prod_{x\in L}\mathrm{P}_{U+x}$$
is well defined by our assumption on their commutativity.
\\
\newline

\begin{lem} With above setting, 
    $$\tr \prod_{x\in L} \mathrm{P}_{U+x}=\tr ((\sigma\bullet \mathrm{P}_U)^{N})$$
for some permutation $\sigma$ of the indices of $\mathrm{P}$.
\end{lem}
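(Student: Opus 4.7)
The plan is to expand both sides in components and recognise them as the same cyclic tensor contraction, with $\sigma$ realised by a specific element of the $S_4$-action on the four index slots of $\mathrm{P}_U$. I would begin by writing $\mathrm{P}_U$ as a tensor with matrix elements $\mathrm{P}^{ab}_{cd}$, where $(a,c)$ live on the first site of $U$ and $(b,d)$ on the second, so that on $V(L)$ the translate $\mathrm{P}_{U+x}$ has matrix element $\mathrm{P}^{ab}_{cd}$ on the two active sites tensored with $\delta$-functions on the remaining $N-2$ sites. By the commutativity assumption the product $\prod_{x\in L}\mathrm{P}_{U+x}$ is order-independent, so I would compose the factors in the canonical cyclic order $(1,2),(2,3),\dots,(N,1)$; inserting resolutions of identity between consecutive factors and collapsing all $\delta$-functions reduces the matrix element of the product to a single sum over $2N$ internal indices $a_1,\dots,a_N,b_1,\dots,b_N$. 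Taking the trace identifies each ``output'' index at site $i$ with the corresponding ``input'' index, leaving a closed cyclic contraction over $2N$ variables.

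Next, I would compare this with the formal expansion
\[
\tr((\sigma\bullet\mathrm{P}_U)^N) \;=\; \sum \tilde{\mathrm{P}}^{x_1 y_1}_{x_2 y_2}\,\tilde{\mathrm{P}}^{x_2 y_2}_{x_3 y_3}\cdots \tilde{\mathrm{P}}^{x_N y_N}_{x_1 y_1},
\]
writing $\tilde{\mathrm{P}}=\sigma\bullet\mathrm{P}_U$. The natural candidate is the $\sigma$ defined by $\tilde{\mathrm{P}}^{xy}_{zw}=\mathrm{P}^{yz}_{xw}$, which cyclically rotates three of the four index positions of $\mathrm{P}_U$. Substituting this into the right-hand side and renaming dummy indices should reproduce exactly the cyclic sum from the first step: each ``crossed'' contraction in the loop (the identification of ``upper-right of $\mathrm{P}_{U+x}$'' with ``lower-left of $\mathrm{P}_{U+x+1}$'', and symmetrically for the other pair) becomes a standard matrix-product contraction of consecutive $\tilde{\mathrm{P}}$'s. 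It suffices to check the identification for one small value of $N$ (for instance $N=3$), after which the translational symmetry of the cyclic product gives every $N$ for free.

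The real work is bookkeeping rather than estimation. The permutation $\sigma$ that does the job mixes upper and lower index slots of $\mathrm{P}_U$, so the one thing to verify is that $\sigma\bullet\mathrm{P}_U$ is again an endomorphism of $V\otimes V$; this holds because the two local Hilbert spaces at the sites of $U$ are canonically isomorphic, which is part of the lattice setup and is precisely what makes $(\sigma\bullet\mathrm{P}_U)^N$ meaningful. A secondary point is that the cyclic matching is rigid once a cyclic orientation of the product is chosen, so the ``for some permutation $\sigma$'' in the statement is harmless: any other consistent orientation yields the same $\sigma$ up to an inner automorphism of $S_4$.
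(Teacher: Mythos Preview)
Your approach is essentially the paper's: both expand the cyclic product $\mathrm{P}_{1,2}\mathrm{P}_{2,3}\cdots\mathrm{P}_{N,1}$ in components, regroup the $2N$ summation indices into $N$ pairs indexed by sites, and recognise the resulting sum as $\tr(Q^N)$ for $Q=\sigma\bullet\mathrm{P}_U$. The paper packages the pair at site $k$ as $\alpha_k=(i_k,i_k')$ and sets $Q^{(i,i')}_{(j,j')}=\mathrm{P}^{ij'}_{i'j}$, whereas your $\tilde{\mathrm{P}}^{xy}_{zw}=\mathrm{P}^{yz}_{xw}$ corresponds to the same pairing in the opposite order; the two choices of $\sigma$ differ only by this relabelling and both establish the ``for some $\sigma$'' claim.
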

\begin{proof}
    We denote $\mathrm{P}_{U}:V\otimes V\rightarrow V\otimes V$ mapping $e_{i,j}\mapsto \mathrm{P}_{ij}^{i'j'}e_{i',j'}$, where $e_{\alpha,\beta}=e_{\alpha}\otimes e_{\beta}$.\\
    Then by translating $U$,
    \begin{align}
        LHS=&Tr(\mathrm{P}_{1,2}\mathrm{P}_{2,3}\dots \mathrm{P}_{N-1,N}\mathrm{P}_{N,1})\\
        =&\sum \mathrm{P}^{i_1i_2'}_{i_1'i_2}\mathrm{P}^{i_2i_3'}_{i_2'i_3}\dots \mathrm{P}^{i_Ni_1'}_{i_N'i_1}\label{traceform}
    \end{align}
    Let 

    \begin{align*}
        \alpha&=
        \begin{pmatrix}
            i\\
            i'
        \end{pmatrix}
    \end{align*}
    and 
    \begin{align*}
    \beta&=
    \begin{pmatrix}
        j\\
        j'
    \end{pmatrix}
    \end{align*}
    and define the tensor $Q$ as $Q^{\alpha}_{\beta}:=\mathrm{P}^{ij'}_{i'j}$,
    Observe that $Q=\sigma\bullet \mathrm{P}$, where $\sigma$ is the swapping of the two indices $i$ and $i'$. On the other hand we have
    \begin{align*}
        (\ref{traceform})&=\sum Q^{\alpha_1}_{\alpha_2}Q^{\alpha_2}_{\alpha_3}\dots Q^{\alpha_N}_{\alpha_1}\\
           &= Tr(Q^N).
    \end{align*}
    
\end{proof}
\noindent\textit{Remark.} This argument obviously generalizes to larger $\mathcal{H}_{loc}$ acting locally on adjacent $m$ sites for arbitrary $m\leq N$ when $d=1$; also to higher dimension: we treat each dimension separately for the translation $\sigma$, adapting the proof with $|L|=N^d$ and making a change of the tensor indices accordingly. Note that $LHS$ of the equation in the lemma is independent of the permutation $\sigma$, whereas the permutation $\sigma$ is itself independent of the size of the lattice $N$.\\
\newline
Given local Hamiltonian $\mathcal{H}_{U}$ acting on $U\subseteq L$, we have
$$\mathrm{P}_U=\lim_{\beta\rightarrow\infty}e^{-\beta \mathcal{H}_U},$$
On the other hand, the global Hamiltonian 
$$\mathcal{H}=\sum_{U'\subseteq L \mathrm{ \,is\, a\, translation\, of\,} U}\mathcal{H}_{U'}$$ whose projection to the kernel is
\begin{align*}
    \mathrm{P}&=\lim_{\beta\rightarrow\infty}e^{-\beta(\sum_{U'} \mathcal{H}_{U'})}\\
     &=\lim_{\beta\rightarrow\infty}\prod_{U'} e^{-\beta \mathcal{H}_{U'}}\\
     &=\prod_{U'}\mathrm{P}_{U'}
\end{align*}
where $U'$s' in the equation are translations of $U$ in $L$. 
Because of the assumption on $[\mathcal{H}_{U'},\mathcal{H}_{U''}]$, together with $\mathcal{H}$ being a sum of local Hamiltonians is independent of the order of the corresponding local sites, so does the projection $\mathrm{P}$.  
\\
\newline
\textit{Proof of Theorem for $d=1$.} Given lattice $L$ of size $N$, local Hamiltonian $\mathcal{H}_{loc}$ acting on $U\subseteq L$, and a permutation $\sigma$ on the indices of $\mathrm{P}_U$, we have 
\begin{align*}
    G_N&=\tr(\prod_{U'}\mathrm{P}_{U'})=\tr(\prod_{x\in L}\mathrm{P}_{U+x})\\
       &=\tr((\sigma\bullet \mathrm{P}_{loc})^N)
\end{align*}
Set $A:=\sigma\bullet \mathrm{P}_{loc}$, then $G_N=\tr(A^N)$. The Caylay-Hamilton theorem \cite{axler1997linear} shows that $$\chi_A(A)=0,$$ where $\chi_A$ is the (monic) characteristic polynomial of $A$. Moreover, since $\mathrm{P}_{loc}$ is a projection and $\sigma\bullet\square$ is a permutation of the tensor coordinates, $A$ has integer entries. It follows that $\chi_A$ is an integral polynomial\footnote{A polynomial is integral if it is monic and has integer coefficients}. Let $$\chi_A(x)=a_{N_0}+a_{N_0-1}x+\dots +a_1x^{N_0-1}+x^{N_0}$$ for $a_i\in\mathbb{Z}$. Taking the trace of $\chi_A(A)$ proves the result we want. $\square$\\
\newline
\noindent
\textit{Remark.} This proof can be adapted to any higher dimensional lattice by treating each dimension separately.

\begin{cor}
     In the low temperature limit $T\longrightarrow 0$, the von Neumann entropy $S_N$ satisfies a recurrence relation of order $N_0$ depending only on $\dim V$ and $m$ (the size of the sublattice $U$), not on $N$.
\end{cor}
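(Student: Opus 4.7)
The plan is to derive the corollary as an almost immediate consequence of the preceding Lemma and Theorem, with the only substantive point being a clarification of what kind of recurrence one obtains for $S_N$.

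First, I would invoke the Lemma, which in the low-temperature limit gives the identification $S_N = \log G_N$, where $G_N = \dim \ker \mathcal{H}_N$ is the ground-state degeneracy. Next, I would apply the Theorem to obtain the integral linear recurrence
\[
\sum_{k=0}^{N_0} a_k G_{N-k} = 0,
\]
where $N_0$ is the degree of the characteristic polynomial of $A = \sigma \bullet \mathrm{P}_{loc}$. Since $\mathrm{P}_{loc}$ acts on $V^{\otimes m}$ and $\sigma$ is merely a permutation of tensor indices, $A$ is a square matrix whose size is a function of $\dim V$ and $m$ alone, so $N_0$ inherits the same property; in particular it is independent of $N$.

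The remaining step is to substitute $G_{N-k} = \exp(S_{N-k})$ into the recurrence, yielding
\[
\sum_{k=0}^{N_0} a_k \exp(S_{N-k}) = 0.
\]
Because $\chi_A$ is monic we have $a_0 = 1$, so this can be solved explicitly as
\[
S_N = \log\!\Bigl(-\sum_{k=1}^{N_0} a_k \exp(S_{N-k})\Bigr),
\]
exhibiting $S_N$ as a recursive function of $S_{N-1}, \dots, S_{N-N_0}$ of order $N_0$.

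The main subtlety I would flag is that, although $G_N$ obeys a \emph{linear} integral recurrence, the induced recurrence on $S_N$ is transcendental rather than linear, being an identity among exponentials of the $S_{N-k}$. This is not really an obstacle so much as a notational caveat, and I would make it explicit in the statement so that the reader does not mistakenly expect an integral linear relation among the entropies themselves. Apart from this, no additional machinery is needed beyond the Lemma and Theorem already proved.
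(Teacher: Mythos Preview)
Your proposal is correct and follows essentially the same approach as the paper, which simply writes ``Directly from Theorem 1.'' Your added clarification that the resulting recurrence for $S_N$ is transcendental rather than linear is a helpful elaboration the paper omits.
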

\begin{proof}
    Directly from Theorem 1.
\end{proof}
\begin{cor}
    In the low temperature and thermodynamical limit  $T\longrightarrow 0$, $N\longrightarrow \infty$, the \textbf{average entropy per site}

$$\lim_{N\rightarrow\infty,T\rightarrow 0} \frac{S_N}{N} $$
is the logarithm of an algebraic integer, \ie its exponential is a root of a monic polynomial equation $$\sum_{k=0}^{N_0} a_k x^{N_0-k}=0$$ with integral coefficients.

\end{cor}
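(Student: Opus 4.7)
The plan is to read off $\exp(\lim_{N\to\infty} S_N/N)$ from the spectral data of the matrix $A=\sigma\bullet \mathrm{P}_{loc}$ produced in Theorem 1, and then verify that this spectral quantity is an algebraic integer. By Lemma 1 and Theorem 1 we have $S_N=\log G_N$ with $G_N=\tr(A^N)$, so the task reduces to understanding the exponential growth rate of the trace of powers of an integer matrix. First I would write $G_N=\sum_{i=1}^{N_0}\lambda_i^N$, where $\lambda_1,\dots,\lambda_{N_0}$ are the eigenvalues of $A$ repeated with algebraic multiplicity and hence the roots over $\mathbb{C}$ of the monic integer polynomial $\chi_A\in\mathbb{Z}[x]$ featured in the statement. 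The target quantity is the spectral radius $\rho:=\max_i|\lambda_i|$.

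The second step is the asymptotic identification $\lim_{N\to\infty}(\log G_N)/N=\log\rho$. The upper bound follows at once from $|G_N|\leq N_0\,\rho^N$. For the matching lower bound, factor out $\rho^N$ so that $G_N/\rho^N=\sum_{|\lambda_i|=\rho}(\lambda_i/\rho)^N+O(r^N)$ for some $r<1$; because $G_N$ is a non-negative integer, one needs to argue that the leading sum does not decay to zero. In the typical situation where the tensor $\mathrm{P}_{loc}$ coming from a classical lattice Hamiltonian (e.g., the Fibonacci Hamiltonian of \cite{wang2024latticemodelfibonaccidegree}) gives $A$ non-negative integer entries, the Perron--Frobenius theorem places $\rho$ itself among the eigenvalues of $A$, forces $\rho$ to be a root of $\chi_A$, and yields $G_N\sim c\,\rho^N$ with $c>0$, which closes this step cleanly.

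The third step is to show that $\rho$ is an algebraic integer. Pick an eigenvalue $\lambda$ with $|\lambda|=\rho$: since $\lambda$ is a root of $\chi_A\in\mathbb{Z}[x]$ it is an algebraic integer, and because $\chi_A$ has real coefficients, $\bar\lambda$ is also a root of $\chi_A$ and hence likewise an algebraic integer. Algebraic integers form a ring, so $\rho^2=\lambda\bar\lambda$ is an algebraic integer; then $\rho$ is a positive real root of the monic polynomial $x^2-\rho^2\in\mathbb{Z}[\rho^2][x]$, and the tower $\mathbb{Z}\subseteq\mathbb{Z}[\rho^2]\subseteq\mathbb{Z}[\rho]$ combined with transitivity of integrality forces $\rho$ to be integral over $\mathbb{Z}$. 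In the Perron--Frobenius case this minimal polynomial over $\mathbb{Z}$ is a factor of $\chi_A$, matching the explicit polynomial displayed in the statement. The main obstacle is concentrated in step two: reconciling possible oscillations of $G_N/\rho^N$ when several eigenvalues simultaneously saturate the spectral radius, which typically requires either the Perron--Frobenius hypothesis on $A$ or reinterpreting the limit as a $\limsup$.
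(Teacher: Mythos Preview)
Your approach coincides with the paper's: both pass from $S_N=\log G_N$ with $G_N=\tr(A^N)$ to the dominant root of $\chi_A$. The paper writes the general solution of the recurrence (allowing polynomial prefactors from repeated roots, which are in fact absent here since $\tr(A^N)=\sum_i\lambda_i^N$ already), invokes two elementary limit facts, and simply asserts $\lim_N G_N^{1/N}=\theta_l$, the root of maximum modulus. Thus the paper concludes the limit \emph{is} an eigenvalue and hence a root of $\chi_A$ directly, bypassing your $\rho^2=\lambda\bar\lambda$ integrality-transitivity argument.

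The obstacle you isolate in step two is genuine and is glossed over in the paper as well: its ``basic fact'' $\lim_N (x^N+y^N)/x^N=1$ is stated only for $|x|>|y|$, so the case of several eigenvalues on the spectral circle, or of a non-real dominant eigenvalue, is not treated there either. Your Perron--Frobenius remedy is exactly what makes the paper's implicit identification $\rho=\theta_l\in\mathbb{R}_{>0}$ rigorous; without some such hypothesis, both arguments need the $\limsup$ reinterpretation you mention. In the Perron--Frobenius regime your extra step three becomes unnecessary, since then $\rho$ is itself an eigenvalue and hence a root of the displayed polynomial $\sum_k a_k x^{N_0-k}$, matching the statement verbatim.
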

\begin{proof}
    From the proof of Theorem 1, $G_N$ satisfies a recurrence relation 
        $$\sum_{k=0}^{N_0} a_kG_{N-k}=0$$
    where all coefficient $a_k\in\mathbb{Z}$, and $a_{k_0}=\pm1$ for the smallest $k_0\in\{0,1,\dots,N_0\}$ such that $a_{k_0}\neq 0$. Assume $k_0=0$ without loss of generality. The charactristic root technique \cite{grimmett2014probability} says that, if $\theta_1,\dots,\theta_r$ are distinct roots of the equation  
        $$\sum_{k\leq N_0} a_kx^{N_0-k}=0$$
    with multiplicity $m_1,\dots,m_r$, respectively, then 
    \begin{align*}
        G_N&=(c^1_1+c^1_2N+\dots+c^1_{m_1}N^{m_1-1})\theta_1^N\\
        &+(c^2_1+c^2_2N+\dots+c^2_{m_2}N^{m_2-1})\theta_2^N\\
        &+\dots\\
        &+(c^r_1+c^r_2N+\dots+c^r_{m_r}N^{m_r-1})\theta_r^N
    \end{align*}
    for some constants $c^i_j$'s. 
    Notice that $\theta_1,\dots,\theta_r$ are all algebraic integers being roots of the monic equation $\sum_{k\leq N_0} a_kx^{N_0-k}=0$. \\
    \newline
    There are two basic facts:
    \begin{itemize}
        \item For any constants $m\in\mathbb{N}^*$ and $c_1,\dots,c_m$ (where not all $c_j$'s are zero),
        $$\lim_{N\rightarrow\infty}\frac{1}{N}\log(c_1+c_2N+\dots+c_{m}N^{m-1})=0;$$
        \item If $|x|>|y|\geq 0$, then 
        $$\lim_{N\rightarrow\infty}\frac{x^N+y^N}{x^N}=1 $$
    \end{itemize}
    we deduce that 
    $$\exp(\lim_{N\rightarrow\infty}\frac{S_N}{N})=\lim_{N\rightarrow\infty}\sqrt[N]{G_N}=\theta_l$$
    where $\theta_l$ has the maximum modulus among $\theta_1,\dots,\theta_r$.    
\end{proof}

\noindent
We have essentially shown that:
\begin{thmA}\label{thmA}
    Given a sequence of finite-dimensional vector spaces each isomorphic to $V$ and a projection $\mathrm{P}\in \End(V^{\otimes 2})$ such that $$[\mathrm{P}_{i,i+1},\mathrm{P}_{j,j+1}]=0\in \End(V\otimes V\otimes V),$$ then 
    \[
\mathrm{P}(N)=\mathrm{P}_{1,2}\mathrm{P}_{2,3}\dots \mathrm{P}_{N-1,N}
    \] 
    is a projection and $$\exp(\lim_{n\rightarrow\infty}\frac{1}{N}\log \dim \Image\mathrm{P}(N))$$ is an algebraic integer.
\end{thmA}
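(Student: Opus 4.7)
The plan is to adapt the transfer-matrix trick of the lemma in Section~2 and the characteristic-root analysis of Corollary~2 from the periodic chain treated there to the open chain $\mathrm{P}(N)=\mathrm{P}_{1,2}\mathrm{P}_{2,3}\cdots\mathrm{P}_{N-1,N}$ appearing in Theorem~A.

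First, I would verify that $\mathrm{P}(N)$ is a projection. Non-adjacent factors $\mathrm{P}_{i,i+1}$ and $\mathrm{P}_{j,j+1}$ commute trivially because they act on disjoint sites, and adjacent ones commute by the assumed relation in $\End(V\otimes V\otimes V)$; hence $\mathrm{P}(N)$ is a product of pairwise commuting idempotents, and so is itself the orthogonal projection onto $\bigcap_{i=1}^{N-1}\Image(\mathrm{P}_{i,i+1})$. In particular, $\dim\Image\mathrm{P}(N)=\tr\mathrm{P}(N)$.

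Next, I would rewrite this trace as a transfer-matrix expression. Expanding $\tr(\mathrm{P}_{1,2}\cdots\mathrm{P}_{N-1,N})$ in tensor components exactly as in the proof of the lemma in Section~2, and grouping the internal summation indices into auxiliary pairs, one expects
\[
\tr\mathrm{P}(N)\;=\;\bra{B}\,T^{N-1}\,\ket{B},
\]
where $T=\sigma\bullet\mathrm{P}\in\End(V\otimes V)$ is the same permutation of the indices of $\mathrm{P}$ as in that lemma, and $\ket{B}=\sum_{a}\ket{aa}\in V\otimes V$ is the diagonal boundary vector that replaces the cyclic closure $\mathrm{P}_{N,1}$ of the periodic case (which produced $\tr(T^{N})$ there). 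Since $T$ arises from $\mathrm{P}$ by a permutation of tensor indices, its matrix entries lie in the same ring as those of $\mathrm{P}$, which -- following the paper's lattice-model convention that a projection has integer entries in the chosen basis -- are integers; hence $\chi_T\in\mathbb{Z}[x]$ is monic of degree at most $(\dim V)^{2}$. Applying Cayley--Hamilton and sandwiching $\chi_T(T)\,T^{m}=0$ between $\bra{B}$ and $\ket{B}$ for every $m\geq 0$ then gives an integer linear recurrence
\[
\sum_{k=0}^{(\dim V)^{2}} c_k\,\dim\Image\mathrm{P}(N-k)=0,\qquad c_0=1,\;c_k\in\mathbb{Z},
\]
valid for all sufficiently large $N$, with coefficients independent of $N$.

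Finally, I would reproduce the characteristic-root technique of Corollary~2 verbatim: the general solution of this recurrence writes $\dim\Image\mathrm{P}(N)$ as a linear combination of $p_i(N)\theta_i^{N}$ with $p_i\in\mathbb{C}[N]$ and $\theta_i$ the distinct roots of the monic integer polynomial~$\chi_T$, which are therefore algebraic integers; extracting $N$-th roots and letting $N\to\infty$ kills the polynomial prefactors and all but the root of largest modulus, yielding
\[
\exp\!\Bigl(\lim_{N\to\infty}\tfrac{1}{N}\log\dim\Image\mathrm{P}(N)\Bigr)\;=\;\max_i|\theta_i|,
\]
an algebraic integer. The main obstacle compared with the cyclic case already handled in the paper is the open boundary, since the periodic closure yields a clean $\tr(T^{N})$ while the open chain only provides the sandwich $\bra{B}T^{N-1}\ket{B}$; this is resolved by observing that $\chi_T(T)=0$ still implies $\bra{B}\chi_T(T)T^{m}\ket{B}=0$ for every $m\geq 0$, so the same integer recurrence holds and the closed-form solution differs from the periodic one only in the constants multiplying each $\theta_i^{N}$, which are washed out in the $N$-th root limit.
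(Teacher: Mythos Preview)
Your approach is essentially the paper's own: reindex $\mathrm{P}$ into a transfer matrix $T=\sigma\bullet\mathrm{P}$, apply Cayley--Hamilton to obtain an integer linear recurrence, and then extract the algebraic integer via the characteristic-root argument of Corollary~2. You are in fact more careful than the paper on one point: Lemma~2, Theorem~1 and Corollary~2 all treat the \emph{periodic} chain (with the wrap-around factor $\mathrm{P}_{N,1}$), where the trace collapses cleanly to $\tr(T^{N})$, whereas Theorem~A is stated for the open product $\mathrm{P}_{1,2}\cdots\mathrm{P}_{N-1,N}$. The paper bridges this with the phrase ``We have essentially shown that''; you fill the gap explicitly by noting that the open boundary merely replaces $\tr(T^{N})$ by a fixed bilinear sandwich $\langle B\rvert T^{N-1}\lvert B\rangle$, so that $\chi_T(T)=0$ still forces the identical integer recurrence, with only the constants in front of each $\theta_i^{N}$ changing. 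That is exactly the right fix. Two minor remarks: the precise form of the boundary vector depends on the paper's index convention $Q^{\alpha}_{\beta}=\mathrm{P}^{ij'}_{i'j}$ and should be checked by a direct index computation, but only the existence of \emph{some} fixed boundary vectors is needed for the recurrence; and writing the limit as $\max_i|\theta_i|$ carries the same looseness as the paper's Corollary~2 (one needs the dominant contributing root to be a positive real algebraic integer, which here is guaranteed by the positivity of $\dim\Image\mathrm{P}(N)$).
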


\section{Categorical entropy of $A_{\infty}$-categories}
Given an exact endofunctor \( F \) on a triangulated category with a generator \( G \), the categorical entropy \( h_t(F) \) can be expressed as:
\[
h_t(F) = \lim_{n \to \infty} \frac{1}{n} \log \delta_t(G, F^n G),
\]
where \( \delta_t \) measures the growth in complexity of objects transformed under \( F \), using relations to the generator \( G \) \cite{1307.8418}. In the \( A_\infty \)-category context, the entropy of an endofunctor can often be computed in terms of the Poincaré polynomial of Ext-groups, particularly when dealing with saturated categories \cite{1307.8418}. For example, in the case of smooth projective varieties or Fukaya categories \cite{HoriMS}, categorical entropy offers insights into the dynamical properties of endofunctors, relating directly to the topological entropy of associated maps in classical dynamical systems \cite{Fathi,kim2022computationcategoricalentropyspherical,FriedlandEntropy}.
\\
\newline
\noindent
In the framework of saturated \( A_\infty \)-categories \cite{0606241v3}, categorical entropy often captures the exponential growth rate of various features of the category, such as the dimensions of Ext-groups or the spectral radius of actions on Hochschild homology. This connection enables categorical entropy to reveal the underlying complexity, stability, and dynamical behavior of objects within the category as they evolve under iterated functorial actions. When the entropy is constant, the structure of the category under \( F \) may be relatively stable. In contrast, larger values or varying entropy often point to more chaotic or intricate transformations, where objects in the category grow in complexity with repeated applications of the functor \cite{1307.8418}.
\newline
\\
\noindent
\textbf{Convention.} We adopt the \textit{Koszul sign rule}: the choice of signs will be dictated by the principle that whenever we switch two objects of degrees $p$ and $q$, respectively, we multiply the sign by $(-1)^{pq}$. More precisely, given graded maps of graded vector spaces $f,g:V\rightarrow W$, if $v_1,v_2\in V$ are homogeneous elements, $f\otimes g(v_1\otimes v_2)=(-1)^{|g||v_1|}f(v_1)\otimes g(v_2)$. Under this setting, the commutator $[f,g]:=f\circ g -(-1)^{|f||g|}g\circ f$.

\subsection{Categorical entropy of saturated $A_{\infty}$-cateegories}
Here we briefly recall the construction of categorical entropy. Then, we discuss the algebraicity of the categorical entropy when $t=0$ and try to give a sufficient condition. For details of $A_{\infty}$-categories and triangulated categories, see \cite{0510508,1307.8418,0606241v3}. We assume that the $A_{\infty}$-category is over a fixed field $k$.
\\

\noindent
Fix a triangulated category $\mathcal{D}$ and an object $G$ in $\mathcal{D}$. Given any object $E$ in $\mathcal{D}$, consider towers of triangles of the following form:
\begin{equation}
\begin{tikzcd}[column sep=1ex,row sep=1ex]
  0 \arrow[rr] && A_1 \arrow[dl] \arrow[rr]&& A_2 \arrow[dl]\cdots A_{k-1}\arrow[rr] && A_k\arrow[dl]    \cong E\bigoplus E'            \\
  &G[n_1]  \arrow[ul,dashed]&& G[n_2] \arrow[ul,dashed]&& G[n_k] \arrow[ul,dashed]
\end{tikzcd}\label{complexity}
\end{equation}
for some $E'$. The \textbf{complexity} of $E$ relative to $G$ is given by 
\[
\delta_t(G,E)=\inf\{\Sigma^k_{i=1}e^{n_it}|\exists E'\textit{ so that some tower of triangles of form (\ref{complexity}) holds}.\}
\]
Of course, if such tower doesn't exist, $\delta_t(G,E)$ is set to be $\infty$. If such tower exists for any object $E$, $G$ is said to be a (split-)generator of $\mathcal{D}$. For given $G$ and $E$, we can also regard the complexity as a function $\delta_{\square}(G,E):\mathbb{R}\rightarrow[-\infty,+\infty]$.\newline
\\
For objects $E_1$, $E_2$, $E_3$ in $\mathcal{D}$, the complexity functions satisfy the following \cite{1307.8418}:
\begin{itemize}
    \item \textit{(triangle inequality):} $\delta_t(E_1,E_3)\leq\delta_t(E_1,E_2)\delta_t(E_2,E_3)$;
    \item \textit{(subadditivity):} $\delta_t(E_1,E_2\bigoplus E_3)\leq\delta_t(E_1,E_2)+\delta_t(E_1,E_3)$;
    \item \textit{(retraction):} $\delta_t(F(E_1),F(E_2))\leq\delta_t(E_1,E_2)$ for any exact functor of triangulated categories $\
    \mathcal{D}\rightarrow\mathcal{D}'$.
\end{itemize}
If $G$ is a generator of $\mathcal{D}$ and $F$ is an exact endofunctor of $\mathcal{D}$, 
the entropy of $F$ is defined by the exponent of $\delta_t(G,F^N(G))$:
\[
h_{t}(F,G):=\lim_{N\rightarrow\infty}\frac{1}{N}\log(\delta_{t}(G,F^N(G)))
\]

\noindent
It is shown \cite{1307.8418} that this limit is independent of the choice of generator $G$.
\newline
\\
\noindent
We mainly focus on \textbf{saturated} $A_{\infty}$-categories. An $A_{\infty}$-category $\mathcal{C}$ is said to be saturated if it is triangulated and is Morita equivalent to a smooth and compact $A_{\infty}$-algebra \cite{0606241v3}. Under this setting, the entropy of an endofunctor $F\in \End(\mathcal{C})$ is computable. Moreover, if $G$ is a generator of $\mathcal{C}$, then \cite{1307.8418}
\[
h_t(F)=\lim_{N\rightarrow\infty}\frac{1}{N}\log\sum_{n\in\mathbb{Z}}\dim \Ext^n(G,F^NG)e^{-nt}.
\]
Also (see \cite{0510508,0310337} for detailed proofs):
\begin{itemize}
    \item Denote $R:=\End_{\mathcal{C}}(G)$, then $R\lmod\cong \mathcal{C}$ via 
    \begin{align*}
        &M \mapsto M\otimes_R G\in\mathcal{C} \\
        &\Hom_{\mathcal{C}}(G,X)\mapsfrom X
    \end{align*}
    \item A functor $F:\mathcal{C}\rightarrow\mathcal{C}$ corresponds to $F_M:M\otimes_{R}\square:R\lmod\rightarrow R\lmod$ for some $R$-bimodule $M$. 
\end{itemize}
It follows that
\begin{align}
    h_0(F) &= \lim_{N\rightarrow\infty}\frac{1}{N}\log\sum_{n\in\mathbb{Z}}\dim \Ext^n(G,M^{\otimes_R N}\otimes_R G)\label{eq1}
\end{align}
It is conjectured in \cite{1307.8418} that $\exp(h_0(F))$ is an algebraic integer. 
\\
\newline
\noindent
Take a free resolution $(R\otimes_kC\otimes_kR,d)$ of $M$ as an $R$-bimodule with some vector space $C$, then $C$ is naturally equipped with grading and 
\begin{align}
(\ref{eq1}) &=\lim_{N\rightarrow\infty}\frac{1}{N}\log\sum_{n\in\mathbb{Z}}\dim H^n(\Hom_R(R,(R\otimes_kC\otimes_kR)^{\otimes_RN}))\\
    &= \lim_{N\rightarrow\infty}\frac{1}{N}\log\sum_{n\in\mathbb{Z}}\dim H^n((R\otimes_kC\otimes_kR)^{\otimes_RN},d^{\otimes N})\\
    &= \lim_{N\rightarrow\infty}\frac{1}{N}\log\sum_{n\in\mathbb{Z}}\dim H^n((R\otimes_kC)^{\otimes_kN}\otimes_kR,d^{\otimes N})
\end{align}
because $R\otimes_RR\cong R$. Here 
$$d^{\otimes N}:=\sum^N_{\substack{i=1 \\ \textit{$d$ is at the}\\\textit{ $i$-th place}}} \Id\otimes\dots\otimes \Id\otimes d\otimes \Id\otimes\dots\otimes \Id$$
is the differential on $(R\otimes_kC)^{\otimes_kN}\otimes_kR\cong (R\otimes_kC\otimes_kR)^{\otimes_RN}$.
\\
\newline
\noindent
Now fix $V:= R\otimes_kC$. Consider a one-dimensional lattice of size $N$ whose local sites are associated with $V_1\cong V_2\cong\cdots\cong V_N\cong V$, respectively (and set $V_{N+1}:=V_1$). Then the action of the differential $d$ on $R\otimes_kC\otimes_kR$ induces a local differential operator $d_{loc}:=d\otimes \Id_C$ acting on $V\otimes_kV\cong R\otimes_kC\otimes_kR\otimes_kC$. Assume that $d_{loc}$ acts on $V_1\otimes_kV_2$ and that $\tau$ is a translation which preserves adjacency with respect to $d_{loc}$, it follows that the local differentials commute: $[d_{loc},d_{loc}\circ\tau]=0$.\newline
\\
\noindent
On the other hand, let $\partial:=\sum_{i=0}^{N-1}d_{loc}\circ\tau^i$ be the sum of all such local operators. Then $\partial$ is the differential operator of the total space (which is a chain complex) $V^{\otimes_kN}\otimes_kV$, where The extra $V$ comes from $V_{N+1}=V_1$. Therefore, by the commutativity of local differentials,
\begin{align}
 \sum_{n\in\mathbb{Z}}\dim H^n((R\otimes_kC)^{\otimes_kN}\otimes_kR,d^{\otimes N})= \sum_{n\in\mathbb{Z}}\dim H^n(V^{\otimes_kN}\otimes_kV,\partial).\label{dimH}
\end{align}

\begin{figure}[H]
    \centering
    \includegraphics[width=10cm]{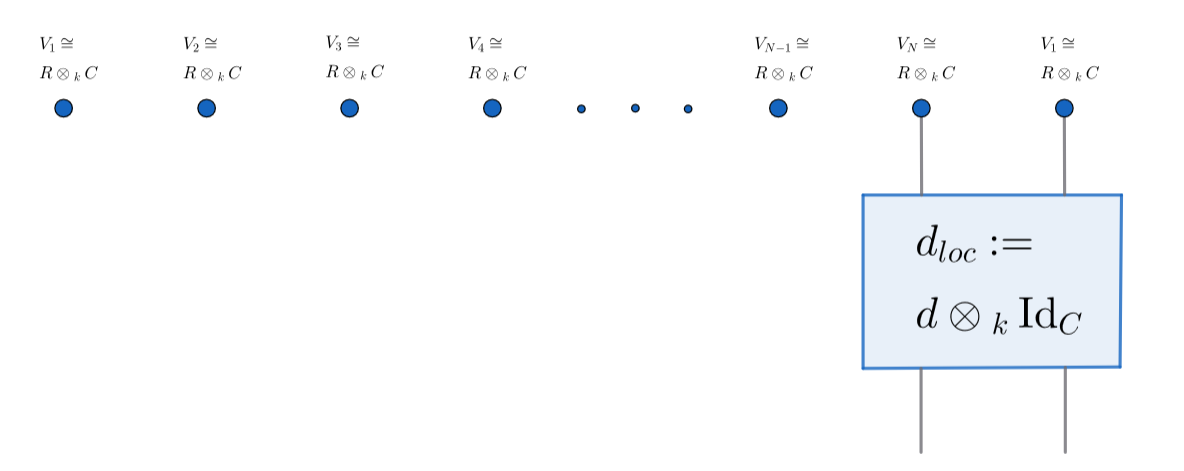}
    \caption{The action of $d_{loc}$ on the adjacent sites $V_N$ and $V_1$. This contributes the extra $R$ in the $LHS$ as well as the extra $V$ in the $RHS$ of (\ref{dimH}).}
    \label{fig:graphic representation example}
\end{figure}
\noindent
We hereby propose a statement:
\begin{condB}\label{Claim B}
    Given a sequence of finite dimensional graded vector spaces each isomorphic to $V$, and a differential $Q\in \End(V\otimes V)$ homogeneous of degree $1$ such that $$[Q_{i,i+1},Q_{j,j+1}]=0\in \End(V\otimes V\otimes V),$$ 
    then 
    \[
    Q(N):=Q_{1,2}+Q_{2,3}+\dots+Q_{N-1,N}\in \End(V^{\otimes N})
    \]
    is a differential and 
    \[\exp(\lim_{N\rightarrow\infty}\frac{1}{N}\log\sum_{n\in\mathbb{Z}}  \dim H^n(Q(N)))\] 
    is an algebraic integer.    
\end{condB}
\noindent
Clearly, this statement implies the algebraicity of $\exp(h_0(F))$.
\\
\newline
\noindent
\textit{Remark.} The above condition can be adapted to super(\ie $\mathbb{Z}_2$-graded)-vector spaces. Under such conditions, the average entropy per site becomes \[\frac{1}{N}\log( \dim H^0(Q(N))+\dim H^1(Q(N))). \]

\subsection{Categorical entropy from lattice model}
To relate the two concepts of entropy from different areas, we need first to generalize the concept of lattice model to consider gauge symmetry. The basic ingredients of a quantum lattice model with gauge symmetry are the following:
\begin{itemize}
    \item A $d$-dimensional lattice $L$ of size $N$ (Here by lattice we mean $\{1,2,\cdots,N\}^d$)

    \item For each lattice point $x\in L$, we have a local graded Hilbert space $V_x$ such that all of them are canonically isomorphic (here we regard, for example, $V_1=V_{N+1}$ when $d=1$, and similarly for higher dimensions).

    Given a subset $U\subseteq L$, we define its space of states 

    $$V(U)=\bigotimes _{x\in U} V_x$$

    and the global space of states $V=V(L)$
    
    \item For some subset $U\subseteq L$ of size $m<N$, we have a local Hamiltonian (\ie a non-negative definite Hermitian operator) $\mathcal{H}_U$ and a local BRST transformation (\ie differential \cite{fuster2005brstantifieldquantizationshortreview}) $Q_U$ acting on $V(U)$ such that the commutator

    $$[Q_U,Q_{U'}]=[\mathcal{H}_U,Q_{U'}]=0$$ for $U,U'\subseteq L$.

    Here the first commutator comes from the previous section, and the second commutator says precisely that the gauge transformation respects the dynamical evolution of the system:\\
\begin{figure}[H]
    \centering
    \includegraphics[width=7.7cm]{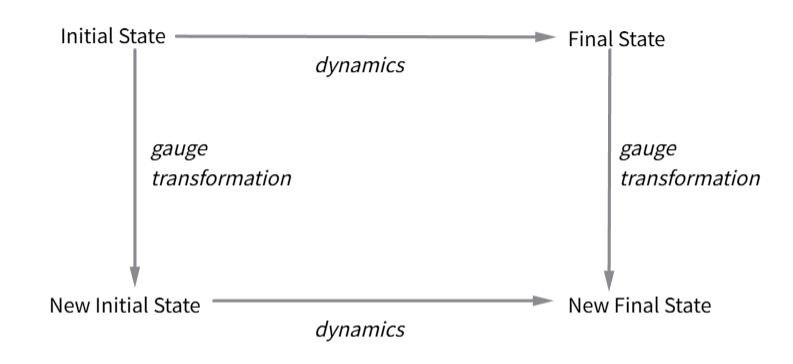}
    \caption{The gauge transformation respects the dynamics}
    \label{Gauge symmetry}
\end{figure}
    Since $\mathcal{H}_U$ and $Q_U$ act on $V(U)$, they also naturally act on $V$ by tensoring with the identity operators on other sites. 
    
    We define the global Hamiltonian 

    $$\mathcal{H}=\sum _{U'\subseteq L \mathrm{ \,is\, a\, translation\, of\,} U} \mathcal{H}_{U'}$$

    and the global gauge transformation 

    $$Q=\sum _{U'\subseteq L \mathrm{ \,is\, a\, translation\, of\,} U} Q_{U'}$$

\end{itemize}
\noindent
We may directly check that $Q$ is a differential on $V$ and $\mathcal{H}$ descends to a non-negative definite Hermitian operator $\mathcal{H}^{phys}$ on the space of physical states defined as the $Q$-cohomology on $V$: $$V^{phys}=H^*(V,Q).$$
\noindent
For $\mathcal{H}^{phys}$ we may also define the von Neumann entropy as in previous sections. There are two special situations where the local operators acts on two adjacent sites (so $U\subseteq L$ has size $m=|U|=2$) are of interest:
\begin{enumerate}
    \item $Q_U=0$: This implies the case for a lattice model, where 
  $\mathcal{H}^{phys}=\mathcal{H}$ and the von Neumann entropy is $S=\log(\dim \ker \mathcal{H})$. In the limit $N\rightarrow\infty$, the average entropy per site is 
    $$\lim_{N\rightarrow\infty}\frac{1}{N}\log\dim \Image\mathrm{P}(N).$$
    We have shown that this is the logarithm of some algebraic integer;
    \item $\mathcal{H}^{phys}=0$: In this case, $V^{phys}=\bigoplus_{n\in\mathbb{Z}}H^n(V,Q)$. Since the global (induced) Hamiltonian vanishes, $\mathrm{P}^{phys}=\Id$, hence
    \begin{align*}
        \dim \Image\mathrm{P}^{phys}(N) &= \dim \Image (\Id_{V^{phys}})\\
                            &=\dim V^{phys}\\
                            &= \sum_{n\in\mathbb{Z}}\dim H^n(V,Q)       
    \end{align*}
    So that the average entropy per site in the limit $N\rightarrow\infty$ is 
    \[
    \lim_{N\rightarrow\infty}\frac{1}{N}\log\sum_{n\in\mathbb{Z}}  \dim H^n(Q(N));
    \]
    If this is the logarithm of some algebraic integer, it would imply the algebraicity of $\exp(h_0(F))$, as stated in \ref{Claim B}.
\end{enumerate}
We propose the analogue of our main result about the von Neumann entropy in the setting of gauged lattice models, whose proof, however, remains unknown:
\begin{mconj}
In the low-temperature and thermodynamic limits $T\longrightarrow 0$, $N\longrightarrow \infty$, the average von Neumann entropy per site
$$\lim_{N\rightarrow\infty,T\rightarrow 0} \frac{S^{phys}_N}{N}=
\lim_{N\rightarrow\infty}\frac{1}{N}\log\dim(\ker \mathcal{H}_N^{phys})$$
of a \textbf{gauged lattice model} of above setting is the logarithm of an algebraic integer.
\end{mconj}
\noindent
From the above discussion, it is immediate that
\begin{thm}
    Categorical entropy corresponds to the von Neumann entropy of a gauged lattice model. Hence our main conjecture for the gauged lattice model would imply the algebraicity of categorical entropy in the case $t=0$ of a saturated $A_{\infty}$-category conjectured in \cite{1307.8418}.
\end{thm}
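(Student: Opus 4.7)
The plan is to assemble the ingredients already developed in Sections~3.1 and~3.2: translate the chain-complex formula for $h_0(F)$ into the data of a specific gauged lattice model, and then invoke the Main Conjecture as a black box. First I would start from the identity
\[
h_0(F)=\lim_{N\rightarrow\infty}\frac{1}{N}\log\sum_{n\in\mathbb{Z}}\dim H^n\bigl((R\otimes_k C)^{\otimes_k N}\otimes_k R,\, d^{\otimes N}\bigr)
\]
derived in Section~3.1, and recall that, setting $V:=R\otimes_k C$ and using $V_{N+1}=V_1$, this coincides with
\[
\lim_{N\rightarrow\infty}\frac{1}{N}\log\sum_{n\in\mathbb{Z}}\dim H^n\bigl(V^{\otimes_k N}\otimes_k V,\,\partial\bigr),
\]
where $\partial=\sum_{i=0}^{N-1} d_{loc}\circ\tau^i$ and the commutativity $[d_{loc},d_{loc}\circ\tau]=0$ has already been verified.

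Next I would exhibit the gauged lattice model that realises this data. Take the one-dimensional lattice with local graded Hilbert space $V$, pick the two-site subset $U=\{1,2\}$, and set
\[
Q_U:=d_{loc}\in\End(V\otimes V),\qquad \mathcal{H}_U:=0.
\]
The axioms of a gauged lattice model are verified immediately: $[Q_U,Q_{U'}]=0$ is precisely the commutativity of local differentials established above, while $[\mathcal{H}_U,Q_{U'}]=0$ is trivial since $\mathcal{H}_U$ vanishes. The global differential is $Q=\sum_{U'}Q_{U'}=\partial$ and the global Hamiltonian is $\mathcal{H}=0$, so $\mathcal{H}^{phys}=0$ and we are in the second special situation identified in Section~3.2. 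Consequently $V^{phys}=H^{*}(V^{\otimes N}\otimes V,\partial)$, the ground-state projection $\mathrm{P}^{phys}(N)$ is the identity on $V^{phys}$, and
\[
S^{phys}_N=\log\dim V^{phys}=\log\sum_{n\in\mathbb{Z}}\dim H^n(Q(N)).
\]
Dividing by $N$ and letting $N\to\infty$ gives exactly $h_0(F)$, which establishes the first sentence of the theorem.

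For the second sentence, I would simply apply the Main Conjecture to this model: it asserts that
\[
\exp\!\Bigl(\lim_{N\rightarrow\infty}\tfrac{1}{N}\log\dim(\ker\mathcal{H}_N^{phys})\Bigr)
\]
is an algebraic integer, and via the identification above this exponential is $\exp(h_0(F))$, giving the conjectured algebraicity. The main conceptual obstacle is not in the current proof, which is essentially bookkeeping, but in the hidden subtlety of the correspondence: one must ensure that the translation $\tau$ chosen in Section~3.1 genuinely provides a set of local operators fitting the gauged-lattice axioms with the cyclic boundary $V_{N+1}=V_1$, and that the extra boundary tensor factor $V$ does not perturb the $\tfrac{1}{N}\log$ limit. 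Both issues are harmless in the $N\to\infty$ limit (the extra factor contributes a bounded additive term and the translation is built to preserve adjacency), so the argument goes through; the genuine hard part, deferred to future work, is of course proving the Main Conjecture itself, where the non-commutativity of $Q$ with the projection data obstructs the Cayley--Hamilton trick that succeeded in the ungauged case of Theorem~A.
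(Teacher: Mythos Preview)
Your proposal is correct and follows essentially the same approach as the paper: the paper's proof of this theorem is literally the phrase ``From the above discussion, it is immediate that\ldots'', where the ``above discussion'' is precisely the identification you spell out---take $Q_U=d_{loc}$, $\mathcal{H}_U=0$, land in the second special situation $\mathcal{H}^{phys}=0$, and read off $S^{phys}_N/N\to h_0(F)$ before invoking the Main Conjecture. Your write-up simply makes explicit the bookkeeping (and flags the boundary-factor and cyclic-translation caveats) that the paper leaves tacit.
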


\noindent
\textit{Remark.} The above setup can be adapted to super-vector spaces. Under such condition, one changes the formula for the average entropy per site accordingly, and the commutators are replaced by super-commutators. 

\section{Conclusion}

In this paper, we have explored the connections between the concept of entropy in statistical mechanics and the more recent notion of categorical entropy, particularly in the context of \( A_\infty \)-categories. By examining the parallels between the entropy-driven behavior of physical systems and the dynamical properties of endofunctors in triangulated and \( A_\infty \)-categories, we have shown that both concepts reflect fundamental aspects of complexity, randomness, and stability in their respective domains. 
\\
\newline
\noindent
The Boltzmann ensemble in classical and quantum statistical mechanics exemplifies how entropy maximization provides insights into the equilibrium states of physical systems. Similarly, categorical entropy offers a measure of the growth in complexity associated with repeated transformations in a category, drawing a direct analogy to entropy in physical systems. In particular, we have investigated how categorical entropy in \( A_\infty \)-categories can be computed using the spectral radius of actions on Hochschild homology, highlighting a pathway for understanding the homological and dynamical aspects of these categories.
\\
\newline
\noindent
Furthermore, we extended our discussion to lattice models, both in classical and quantum settings, as simplified representations of interacting particle systems. We demonstrated that, in the low-temperature limit, the von Neumann entropy of these models exhibits distinct characteristics that can be studied systematically. This analysis has also provided insights into the behavior of entropy in the thermodynamic limit, paving the way for further exploration of lattice models with gauge symmetry. By connecting categorical entropy with gauged lattice models, we propose a conjectural link between the algebraic properties of categorical entropy and the von Neumann entropy within these models.
\\
\newline
\noindent
This work provides a foundation for future interdisciplinary research, suggesting that categorical entropy may play a role in understanding entropy beyond physical systems. In particular, the parallels we draw between categorical and physical entropy invite further exploration into the applications of categorical entropy in other areas, such as symplectic geometry and algebraic geometry, where the \( A_\infty \)-framework is prevalent. Additionally, our findings may inform studies in quantum information theory, where von Neumann entropy and its categorical analogues can offer new perspectives on information, entanglement, and complexity. The connections outlined in this paper highlight the potential for a unified framework where entropy serves as a central concept across both mathematical and physical theories.

\printbibliography

@article{article,
author = {Michor, Peter},
year = {1987},
month = {01},
pages = {},
title = {Remarks on the Schouten-Nijenhuis bracket},
volume = {16},
journal = {Suppl. Rend. Circ. Mat. di Palermo, Serie II}
}

@misc{0510508,
      title={A-infinity algebras, modules and functor categories}, 
      author={Bernhard Keller},
      year={2006},
      eprint={math/0510508},
      archivePrefix={arXiv},
      primaryClass={math.RT},
      url={https://arxiv.org/abs/math/0510508}, 
}

@article{1307.8418,
    author = "Dimitrov, George and Haiden, Fabian and Katzarkov, Ludmil and Kontsevich, Maxim",
    title = "{Dynamical systems and categories}",
    eprint = "1307.8418",
    archivePrefix = "arXiv",
    primaryClass = "math.CT",
    doi = "10.1090/conm/621",
    month = "7",
    year = "2013"
}

@misc{0606241v3,
      title={Notes on A-infinity algebras, A-infinity categories and non-commutative geometry. I}, 
      author={Maxim Kontsevich and Yan Soibelman},
      year={2024},
      eprint={math/0606241},
      archivePrefix={arXiv},
      primaryClass={math.RA},
      url={https://arxiv.org/abs/math/0606241}, 
}

@misc{0310337,
      title={Sur les A-infini cat\'egories}, 
      author={Kenji Lefèvre-Hasegawa},
      year={2003},
      eprint={math/0310337},
      archivePrefix={arXiv},
      primaryClass={math.CT},
      url={https://arxiv.org/abs/math/0310337}, 
}

@inbook{Boltzmann_2012, place={Cambridge}, series={Cambridge Library Collection - Physical  Sciences}, title={Über die mechanische Bedeutung des zweiten Hauptsatzes der Wärmetheorie}, booktitle={Wissenschaftliche Abhandlungen}, publisher={Cambridge University Press}, author={Boltzmann, Ludwig}, editor={Hasenöhrl, FriedrichEditor}, year={2012}, pages={9–33}, collection={Cambridge Library Collection - Physical  Sciences}}

@book{Baxter:1982zz,
    author = "Baxter, R. J.",
    title = "{Exactly solved models in statistical mechanics}",
    doi = "10.1142/9789814415255_0002",
    isbn = "978-0-486-46271-4",
    year = "1982"
}

@book{axler1997linear,
  title={Linear Algebra Done Right},
  author={Axler, S.},
  isbn={9780387982595},
  lccn={95044889},
  series={Undergraduate Texts in Mathematics},
  url={https://books.google.co.uk/books?id=ovIYVIlithQC},
  year={1997},
  publisher={Springer New York}
}

@book{vonNeumannQSM,
 ISBN = {9780691178561},
 URL = {http://www.jstor.org/stable/j.ctt1wq8zhp},
 abstract = {Quantum mechanics was still in its infancy in 1932 when the young John von Neumann, who would go on to become one of the greatest mathematicians of the twentieth century, publishedMathematical Foundations of Quantum Mechanics--a revolutionary book that for the first time provided a rigorous mathematical framework for the new science. Robert Beyer's 1955 English translation, which von Neumann reviewed and approved, is cited more frequently today than ever before. But its many treasures and insights were too often obscured by the limitations of the way the text and equations were set on the page. In this new edition of this classic work, mathematical physicist Nicholas Wheeler has completely reset the book in TeX, making the text and equations far easier to read. He has also corrected a handful of typographic errors, revised some sentences for clarity and readability, provided an index for the first time, and added prefatory remarks drawn from the writings of Léon Van Hove and Freeman Dyson. The result brings new life to an essential work in theoretical physics and mathematics.},
 author = {John von Neumann and ROBERT T. BEYER},
 edition = {NED - New edition},
 publisher = {Princeton University Press},
 title = {Mathematical Foundations of Quantum Mechanics: New Edition},
 urldate = {2025-05-05},
 year = {2018}
}

@book{Bratteli:1979tw,
    author = "Bratteli, O. and Robinson, D. W.",
    title = "{OPERATOR ALGEBRAS AND QUANTUM STATISTICAL MECHANICS. 1. C* AND W* ALGEBRAS, SYMMETRY GROUPS, DECOMPOSITION OF STATES}",
    year = "1979"
}

@book{Landau:1980mil,
    author = "Landau, L. D. and Lifshitz, E. M.",
    title = "{Statistical Physics, Part 1}",
    isbn = "978-0-7506-3372-7",
    publisher = "Butterworth-Heinemann",
    address = "Oxford",
    series = "Course of Theoretical Physics",
    volume = "5",
    year = "1980"
}

@ARTICLE{BoseHubbard,
       author = {{Gersch}, H.~A. and {Knollman}, G.~C.},
        title = "{Quantum Cell Model for Bosons}",
      journal = {Physical Review},
         year = 1963,
        month = jan,
       volume = {129},
       number = {2},
        pages = {959-967},
          doi = {10.1103/PhysRev.129.959},
       adsurl = {https://ui.adsabs.harvard.edu/abs/1963PhRv..129..959G},
      adsnote = {Provided by the SAO/NASA Astrophysics Data System}
}

@ARTICLE{HeisenbergModel,
       author = {{Heisenberg}, W.},
        title = "{Zur Theorie des Ferromagnetismus}",
      journal = {Zeitschrift fur Physik},
         year = 1928,
        month = sep,
       volume = {49},
       number = {9-10},
        pages = {619-636},
          doi = {10.1007/BF01328601},
       adsurl = {https://ui.adsabs.harvard.edu/abs/1928ZPhy...49..619H},
      adsnote = {Provided by the SAO/NASA Astrophysics Data System}
}

@article{adlertopologicalentropy,
 ISSN = {00029947},
 URL = {http://www.jstor.org/stable/1994177},
 author = {R. L. Adler and A. G. Konheim and M. H. McAndrew},
 journal = {Transactions of the American Mathematical Society},
 number = {2},
 pages = {309--319},
 publisher = {American Mathematical Society},
 title = {Topological Entropy},
 urldate = {2025-05-05},
 volume = {114},
 year = {1965}
}

@book{HoriMS,
    author = "Hori, K. and Katz, S. and Klemm, A. and Pandharipande, R. and Thomas, R. and Vafa, C. and Vakil, R. and Zaslow, E.",
    title = "{Mirror symmetry}",
    publisher = "AMS",
    address = "Providence, USA",
    series = "Clay mathematics monographs",
    volume = "1",
    year = "2003"
}

@misc{wang2024latticemodelfibonaccidegree,
      title={A lattice model with Fibonacci degree of degeneracy}, 
      author={Athena Wang},
      year={2024},
      eprint={2310.10058},
      archivePrefix={arXiv},
      primaryClass={math-ph},
      url={https://arxiv.org/abs/2310.10058}, 
}

@misc{orlov2009derivedcategoriescoherentsheaves,
      title={Derived categories of coherent sheaves on abelian varieties and equivalences between them}, 
      author={Dmitri Orlov},
      year={2009},
      eprint={alg-geom/9712017},
      archivePrefix={arXiv},
      primaryClass={alg-geom},
      url={https://arxiv.org/abs/alg-geom/9712017}, 
}

@book{grimmett2014probability,
  title={Probability: An Introduction},
  author={Grimmett, G. and Welsh, D.},
  isbn={9780198709978},
  lccn={2014938231},
  series={Oxford University Press},
  url={https://books.google.co.uk/books?id=ivuKBAAAQBAJ},
  year={2014},
  publisher={Oxford University Press}
}

@misc{fuster2005brstantifieldquantizationshortreview,
      title={BRST-antifield Quantization: a Short Review}, 
      author={Andrea Fuster and Marc Henneaux and Axel Maas},
      year={2005},
      eprint={hep-th/0506098},
      archivePrefix={arXiv},
      primaryClass={hep-th},
      url={https://arxiv.org/abs/hep-th/0506098}, 
}

@misc{kim2022computationcategoricalentropyspherical,
      title={Computation of categorical entropy via spherical functors}, 
      author={Jongmyeong Kim},
      year={2022},
      eprint={2102.08590},
      archivePrefix={arXiv},
      primaryClass={math.AG},
      url={https://arxiv.org/abs/2102.08590}, 
}

@article{Blanc_2015,
   title={Dynamical degrees of birational transformations of projective surfaces},
   volume={29},
   ISSN={1088-6834},
   url={http://dx.doi.org/10.1090/jams831},
   DOI={10.1090/jams831},
   number={2},
   journal={Journal of the American Mathematical Society},
   publisher={American Mathematical Society (AMS)},
   author={Blanc, Jérémy and Cantat, Serge},
   year={2015},
   month=jun, pages={415–471} }

@article{Cantat_2013,
   title={Normal subgroups in the Cremona group},
   volume={210},
   ISSN={0001-5962},
   url={http://dx.doi.org/10.1007/s11511-013-0090-1},
   DOI={10.1007/s11511-013-0090-1},
   number={1},
   journal={Acta Mathematica},
   publisher={International Press of Boston},
   author={Cantat, Serge and Lamy, Stéphane and Cornulier, Yves},
   year={2013},
   pages={31–94} }

@misc{bridgeland2006stabilityconditionstriangulatedcategories,
      title={Stability conditions on triangulated categories}, 
      author={Tom Bridgeland},
      year={2006},
      eprint={math/0212237},
      archivePrefix={arXiv},
      primaryClass={math.AG},
      url={https://arxiv.org/abs/math/0212237}, 
}

@article{Ising:1925em,
    author = "Ising, Ernst",
    title = "{Contribution to the Theory of Ferromagnetism}",
    doi = "10.1007/BF02980577",
    journal = "Z. Phys.",
    volume = "31",
    pages = "253--258",
    year = "1925"
}

@book{Fathi,
     author = {Collectif},
     title = {Travaux de {Thurston} sur les surfaces - {S\'eminaire} {Orsay}},
     series = {Ast\'erisque},
     publisher = {Soci\'et\'e math\'ematique de France},
     number = {66-67},
     year = {1979},
     zbl = {0406.00016},
     mrnumber = {568308},
     language = {mul},
     url = {https://www.numdam.org/item/AST_1979__66-67_/}
}

@inbook{FriedlandEntropy,
author = {Friedland, Shmuel},
year = {2020},
month = {03},
pages = {215-228},
title = {Entropy of Algebraic Maps},
isbn = {9780429332838},
doi = {10.1201/9780429332838-12}
}

\end{document}